\newtheorem{theorem}{Theorem}
\newtheorem{conjecture}{Conjecture}
\newenvironment{proof}[1][Proof.]{\begin{trivlist}
\item[\hskip \labelsep {\bfseries #1}]}{\end{trivlist}}
\begin{document}

\title{Bricks and conjectures of Berge, Fulkerson and Seymour}

\author{Vahan V. Mkrtchyan\address[MCSD]{Paderborn Institute for Advanced Studies in Computer Science and Engineering,
Paderborn University, Warburger Str. 100, 33098 Paderborn, Germany}%
\thanks{The author is supported by a fellowship from Heinrich Hertz-Stiftung}
\thanks{email: vahanmkrtchyan2002@\{ysu.am, ipia.sci.am,
yahoo.com\}}
                        and
        Eckhard Steffen \addressmark[MCSD]\thanks{email: es@upb.de}}


\runtitle{Bricks and conjectures of Berge, Fulkerson and Seymour}
\runauthor{Vahan Mkrtchyan, Eckhard Steffen}
\maketitle

\begin{abstract}
An $r$-graph is an $r$-regular graph where every odd set of vertices is connected by at least $r$ edges to the rest of the graph. Seymour conjectured that any $r$-graph is $r+1$-edge-colorable, 
and also that any $r$-graph contains $2r$ perfect matchings such that each edge belongs to two of them. We show that the minimum counter-example to either of these conjectures is a brick.
Furthermore we disprove a variant of a conjecture of Fan, Raspaud. 
\end{abstract}

\section{Introduction and definitions}

We consider finite graphs $G=(V,E)$ with vertex set $V$ and edge set $E$. The graphs might have multiple 
edges but no loops, and throughout this paper we assume that the graphs under consideration are connected (otherwise we will mention this explicitely). 
Terms and concepts that we do not define can be found in Chapter 3 of \cite{Handbook}.

A perfect matching of a graph $G$ is a matching covering all vertices of $G$, and $\theta(G)$ denotes the number of odd components of $G$.  
Tutte characterized the graphs with perfect matching.

\begin{theorem}\label{Tutte}(Tutte) A graph $G$ has a perfect matching if and only if $\theta(G-X) \leq |X|$, for each $X\subseteq V(G)$.
\end{theorem}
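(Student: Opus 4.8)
The plan is to prove both implications, the reverse one being the substantial part. For necessity, suppose $G$ has a perfect matching $M$ and fix $X\subseteq V(G)$. I would observe that every odd component $C$ of $G-X$ must contain a vertex that $M$ matches outside $C$, since otherwise $M$ would restrict to a perfect matching of the odd-order graph $C$; because $C$ is a connected component of $G-X$, the partner of such a vertex lies in $X$, and distinct odd components receive distinct vertices of $X$. Hence $\theta(G-X)\leq |X|$.

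For sufficiency I would argue by contradiction with an edge-maximality (extremal) argument. Suppose $G$ satisfies Tutte's condition but has no perfect matching; first note $|V(G)|$ is even by taking $X=\emptyset$. Among all graphs on the vertex set $V(G)$ satisfying Tutte's condition but having no perfect matching, choose one (still called $G$) with the maximum number of edges: this is legitimate because adding an edge can only decrease $\theta(G-X)$ for each $X$, so the condition is preserved under edge additions, while $K_{|V(G)|}$ has a perfect matching, so a maximal counterexample is a proper subgraph of it. Let $U$ be the set of vertices of $G$ adjacent to all other vertices. The core claim is that every component of $G-U$ is a complete graph. Granting this, I would finish by matching each even component of $G-U$ internally; applying Tutte's condition with $X=U$ gives $\theta(G-U)\leq |U|$, and since $U$ is a clique whose vertices are joined to everything, a greedy/Hall-type argument matches one vertex of each odd component of $G-U$ to a distinct vertex of $U$, after which the remaining vertices of each odd component are matched internally and the leftover even-sized subset of $U$ is matched inside $U$ — a perfect matching of $G$, contradiction.

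To prove the claim, suppose some component of $G-U$ is not complete; taking a shortest path between two non-adjacent vertices of that component yields $a,b,c\notin U$ with $ab,bc\in E(G)$ but $ac\notin E(G)$, and since $b\notin U$ there is a vertex $d$ with $bd\notin E(G)$. By edge-maximality, $G+ac$ has a perfect matching $M_1$ and $G+bd$ has a perfect matching $M_2$, and necessarily $ac\in M_1$ and $bd\in M_2$, for otherwise we would already have a perfect matching of $G$. I would then examine $M_1\triangle M_2$, a disjoint union of alternating paths and even cycles containing both $ac$ and $bd$, and reroute along the component(s) carrying these edges — distinguishing whether $ac$ and $bd$ lie on the same alternating cycle/path or on different ones — to extract a perfect matching using neither $ac$ nor $bd$, i.e.\ a perfect matching of $G$ itself, the desired contradiction.

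I expect the main obstacle to be exactly this rerouting step: setting up the case analysis for how $ac$ and $bd$ sit inside $M_1\triangle M_2$ so that in every case one genuinely recovers a perfect matching of $G$ avoiding both non-edges. The remaining ingredients — the necessity direction, the extremal setup with its monotonicity remark, and the assembly of the final matching from the clique decomposition of $G-U$ — are routine.
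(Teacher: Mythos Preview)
The paper does not prove Tutte's theorem at all: it is quoted in the introduction as a classical background result, with no argument given. There is therefore nothing in the paper to compare your proposal against.

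That said, your outline is the standard Lov\'asz edge-maximality proof and is sound. A couple of remarks on the points you leave open. First, since $M_1$ and $M_2$ are both perfect matchings, every vertex has degree $0$ or $2$ in $M_1\triangle M_2$, so the components are even cycles only (no paths); this simplifies the case split. If $ac$ and $bd$ lie on different cycles and $C$ is the cycle through $bd$, then $M_2\triangle E(C)$ is a perfect matching of $G$: it drops $bd$, picks up only $M_1$-edges of $C$ (none of which is $ac$), and never contained $ac$ to begin with. If $ac$ and $bd$ lie on the same cycle $C$, delete $ac$ to obtain an $a$--$c$ path $P$; up to swapping $a\leftrightarrow c$ you may assume $d$ lies on the $b$--$c$ subpath of $P$, and then the matching
\[
\bigl(M_2\text{-edges on the }a\text{--}b\text{ subpath}\bigr)\ \cup\ \{bc\}\ \cup\ \bigl(M_1\text{-edges on the }d\text{--}c\text{ subpath}\bigr)\ \cup\ \bigl(M_1\restriction_{V(G)\setminus V(C)}\bigr)
\]
is a perfect matching of $G$ avoiding both $ac$ and $bd$. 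Second, in your assembly step the parity check that $|U|-\theta(G-U)$ is even (so the leftover of $U$ can be matched internally) follows from $|V(G)|\equiv |U|+\theta(G-U)\pmod 2$ together with $|V(G)|$ even; you might make that explicit.
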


A class of graphs possessing a perfect matching is the class of $r$-graphs \cite{Seymour_1979}, which are $r$-regular graphs with  $|\partial(X)|\geq r$ for every odd $X\subseteq V(G)$, 
where $\partial(X)$ is the set of edges of $G$ with precisely one end in $X$. The following conjectures are due to Seymour \cite{Seymour_1979}.

\begin{conjecture}\label{SeymourColoring} Any $r$-graph is $(r+1)$-edge-colorable.
\end{conjecture}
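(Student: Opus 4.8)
The plan is to attack Conjecture~\ref{SeymourColoring} by induction on $r$, peeling off one perfect matching at a time. The base cases are elementary: an $r$-graph always satisfies the Tutte condition of Theorem~\ref{Tutte} (since $\theta(G-X)\le |\partial(X)|/\,$estimates force $\theta(G-X)\le |X|$), so a perfect matching exists; and for $r\le 3$ one removes a perfect matching $M$ as color class $r+1$ and properly colors the remaining $2$-regular multigraph $G-M$ with the other three colors. The inductive step would rest on the same mechanism: given an $r$-graph $G$ with $r\ge 4$, I would seek a perfect matching $M$ for which $G-M$ is again an $(r-1)$-graph. The induction hypothesis then supplies a proper $r$-edge-coloring of $G-M$ with colors $\{1,\dots,r\}$, and adjoining $M$ as color $r+1$ yields the desired $(r+1)$-edge-coloring of $G$.

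The whole difficulty is thus concentrated in producing a \emph{good} perfect matching. Removing any perfect matching $M$ leaves an $(r-1)$-regular graph, so only the odd-cut condition $|\partial_{G-M}(X)|\ge r-1$ can fail. Writing $|\partial_{G-M}(X)| = |\partial_G(X)| - |M\cap \partial(X)|$ and recalling that $|M\cap\partial(X)|$ is odd for every odd $X$, I observe that the condition can break only at a \emph{tight cut}, that is an odd set with $|\partial_G(X)|=r$, and only when $M$ crosses it three or more times. Hence the step reduces to the clean-looking combinatorial task: find a perfect matching meeting every tight odd cut in exactly one edge.

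To construct such a matching I would exploit the perfect matching polytope. The defining feature of $r$-graphs is that the uniform weighting $w_e\equiv 1/r$ lies in $\mathrm{PMP}(G)$, since $w(\partial(v))=1$ and $w(\partial(X))=|\partial(X)|/r\ge 1$ for every odd $X$; on a tight cut one has $w(\partial(X))=1$, so such cuts are supporting constraints of the polytope and every matching appearing in a convex representation of $w$ crosses them exactly once. I would then try to quotient out the complication by tight-cut contraction: contracting one side of a nontrivial tight cut preserves $r$-regularity and keeps both pieces $r$-graphs, so it would suffice to treat the case where $G$ has no nontrivial tight cut, i.e.\ to reduce to a brick exactly as isolated by the main theorem of this paper, and then glue colorings along the $r$ cut edges (which, meeting a single contracted vertex, receive $r$ distinct colors on each side and can be realigned by permuting the $r+1$ colors).

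The hard part, and the reason the statement is still only a conjecture, is that the tight cuts that matter for regularity are \emph{not} the tight cuts of matching theory: an odd set with $|\partial(X)|=r$ need not be crossed exactly once by every perfect matching (already in the cubic case a $3$-edge-cut can be crossed three times by some perfect matching). Such an $r$-cut is not a Lov\'asz-tight cut, so it is invisible to the brick decomposition and can survive inside a brick; the reduction of this paper therefore does not by itself dispose of it. Consequently, even on a brick one must still exhibit a single perfect matching that crosses \emph{every} $r$-cut exactly once, and I know of no mechanism that forces such a matching to exist. Guaranteeing this selection -- equivalently, ruling out that every perfect matching is ``captured'' by some bad $r$-cut -- is where I expect the argument to stall and where a genuinely new idea seems to be required.
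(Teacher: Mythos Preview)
The statement you are trying to prove is Conjecture~\ref{SeymourColoring}; the paper does \emph{not} prove it. What the paper proves is only that a minimum counter-example, if one exists, must be a brick. So there is no ``paper's own proof'' to match here, and any complete argument would settle a long-standing open problem.

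More importantly, your inductive plan is not merely hard to complete --- it is provably blocked, and the obstruction is stated explicitly in this very paper. You propose to find, in an $r$-graph $G$, a perfect matching $M$ with $G-M$ again an $(r-1)$-graph, and then induct. But the paper recalls Rizzi's construction of \emph{unslicable} $r$-graphs for every $r\ge 3$: $r$-graphs in which $G-F$ fails to be an $(r-1)$-graph for \emph{every} perfect matching $F$. Thus the ``good'' perfect matching you seek need not exist at all, and your phrase ``I know of no mechanism that forces such a matching to exist'' should be strengthened to ``such a matching is known not to exist in general.'' Your reduction to bricks via tight-cut contraction does not rescue the scheme, since unslicable examples (or bricks derived from them) already witness the failure.

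A smaller technical slip: it is not true that the odd-cut condition for $G-M$ can fail only at a set $X$ with $|\partial_G(X)|=r$. Parity gives $|\partial_G(X)|\in\{r,r+2,r+4,\dots\}$, and if $|\partial_G(X)|=r+2k$ while $|M\cap\partial(X)|\ge 2k+3$, then $|\partial_{G-M}(X)|\le r-3<r-1$. So non-tight odd cuts can also go bad; your reduction of the inductive step to ``cross every $r$-cut exactly once'' is already an over-simplification before one even runs into unslicability.
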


\begin{conjecture}\label{SeymourPerfects} Any $r$-graph contains $2r$ perfect matchings such that each edge belongs to precisely two of them.
\end{conjecture}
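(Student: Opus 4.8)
The plan is to translate the statement into the language of the perfect matching polytope. By Edmonds' theorem a vector $x\in\mathbb{R}^E$ lies in the perfect matching polytope of $G$ precisely when $x\ge 0$, $\sum_{e\ni v}x_e=1$ for every vertex $v$, and $\sum_{e\in\partial(X)}x_e\ge 1$ for every odd $X\subseteq V(G)$. For an $r$-graph the uniform vector $x_e=\tfrac1r$ satisfies all three: the degree equations hold because $G$ is $r$-regular, and $\sum_{e\in\partial(X)}x_e=|\partial(X)|/r\ge 1$ is exactly the defining odd-cut condition of an $r$-graph. Hence $\tfrac1r\mathbf 1$ is a convex combination of perfect matchings. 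Conjecture~\ref{SeymourPerfects} asks for the far stronger conclusion that $2\cdot\mathbf 1=\sum_{j=1}^{2r}\chi^{M_j}$ for perfect matchings $M_1,\dots,M_{2r}$; equivalently, that the fractional cover $\tfrac1r\mathbf 1$ can be realised with the single uniform weight $\tfrac1{2r}$ on $2r$ (not necessarily distinct) perfect matchings. The whole difficulty is to pass from an \emph{arbitrary} rational convex combination to this rigid, uniform, integral one.

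First I would reduce to bricks by proving that the $2r$-cover property passes through the tight-cut decomposition of Edmonds, Lov\'asz and Pulleyblank. Perfect matchings restrict and lift across a tight cut $\partial(X)$, since they meet it in a single edge, so a family of $2r$ matchings doubly covering the edges of each decomposition factor should be spliceable into a doubly covering family of the whole graph. Carrying this out carefully -- matching up which matchings of the two sides share the crossing edge, and checking that the bookkeeping yields exactly $2r$ -- would reduce Conjecture~\ref{SeymourPerfects} to the case where $G$ is a brick.

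For a brick I would attempt an induction, either on $r$ or on the size of the graph via an ear decomposition. The step on $r$ would try to extract from $\tfrac1r\mathbf 1$ two perfect matchings whose deletion leaves an $(r-2)$-graph (or a perfect matching together with an $(r-1)$-graph, feeding into Conjecture~\ref{SeymourColoring}); the obstruction here is that deleting matchings need not preserve the odd-cut inequality, so one must choose the matchings to respect the tight cuts. The ear-decomposition route would instead build the family incrementally as ears are added to the brick, maintaining a uniform $\tfrac1{2r}$-weighted doubly covering family as an invariant.

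The main obstacle is the gap flagged in the first paragraph: Edmonds' theorem supplies a convex combination with uncontrolled denominators and an uncontrolled number of matchings, whereas the conjecture demands a lattice realisation with the prescribed uniform weight $\tfrac1{2r}$. Forcing this rigidity is genuinely hard -- for $r=3$ it is exactly the Fulkerson conjecture, which is open -- so any complete argument must inject a new idea beyond mere polytope membership at precisely this point. Consequently the realistic deliverable along these lines is the reduction to bricks, after which the brick case (already for cubic bricks) is the true heart of the problem.
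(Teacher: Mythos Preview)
The statement you are addressing is Conjecture~\ref{SeymourPerfects}; the paper does \emph{not} prove it, and you correctly identify that the $r=3$ case is the open Berge--Fulkerson conjecture. So there is no ``paper's own proof'' to compare against for the full statement. Your honest diagnosis---that the polytope argument only yields $\tfrac1r\mathbf 1$ as a convex combination with uncontrolled denominators, and that bridging to the rigid $\tfrac1{2r}$ decomposition is the entire difficulty---is exactly right, and your proposal is appropriately candid that it does not close this gap.

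What the paper \emph{does} establish is precisely your ``realistic deliverable'': a minimum counterexample to Conjecture~\ref{SeymourPerfects} is a brick. Here the routes differ. You propose invoking the tight-cut decomposition of Edmonds--Lov\'asz--Pulleyblank and splicing $2r$-covers across tight cuts. The paper instead argues directly and more elementarily: it shows (Theorem~\ref{decomposition}) via maximal barriers that any $r$-graph is bipartite, or has a non-trivial odd $X$ with $|\partial(X)|=r$, or is bi-critical; then (Theorem~\ref{BrickTheorem}) that a bi-critical $r$-graph with no such non-trivial tight odd cut is $3$-connected and hence a brick; finally (Theorem~4) a minimum counterexample is non-bipartite and, by contracting across a putative tight odd cut and recombining the $2r$-covers of the two smaller $r$-graphs, has no such cut---so it is a brick. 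Your splicing idea is essentially the same contraction-and-recombination step, but packaged inside the heavier ELP machinery; the paper's barrier argument avoids that machinery and gets the bi-criticality and $3$-connectivity by hand. Both approaches arrive at the same reduction; the paper's is self-contained, while yours would import a structural theorem but make the inductive step feel more canonical.
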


Conjecture \ref{SeymourPerfects} was first formulated by Berge, Fulkerson for $r=3$ (Berge Fulkerson Conjecture).  
The Berge Fulkerson Conjecture implies the following conjecture made by Fan, Raspaud \cite{FanRaspaud}.

\begin{conjecture}\label{FanRaspaud} Any $3$-graph contains three perfect matchings $F_1,F_2,F_3$ such that $F_1\cap F_2 \cap F_3=\emptyset$.
\end{conjecture}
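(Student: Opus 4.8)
The plan is to prove Conjecture~\ref{FanRaspaud} in two stages, first recording the cheap conditional argument that motivates the preceding remark and then isolating what an unconditional proof really costs. For the conditional part, suppose a $3$-graph $G$ admits the Fulkerson cover promised by Conjecture~\ref{SeymourPerfects} with $r=3$: six perfect matchings $M_1,\dots,M_6$ with every edge lying in exactly two of them. Then for any three indices, say $1,2,3$, an edge of $M_1\cap M_2\cap M_3$ would lie in at least three of the six matchings, contradicting the exact double cover; hence $M_1\cap M_2\cap M_3=\emptyset$, and \emph{any} three of the six matchings witness Conjecture~\ref{FanRaspaud}. This is exactly why the Berge--Fulkerson conjecture implies the Fan--Raspaud statement, and it already settles every $3$-edge-colorable $G$ without appeal to the stronger conjecture: the three color classes are pairwise disjoint perfect matchings, so their triple intersection is trivially empty.

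So the real target is the non-$3$-edge-colorable (Class~$2$) case, and the plan there is to mimic the paper's reduction to bricks. First I would note that a $3$-graph is precisely a bridgeless cubic graph, since in a cubic graph $|\partial(X)|\equiv|X|\pmod 2$, so an odd cut has odd size and the constraint $|\partial(X)|\ge 3$ merely forbids bridges; Theorem~\ref{Tutte} (or Petersen's theorem) then guarantees a starting perfect matching $M_1$, whose complement is a $2$-factor containing an odd cycle precisely when $G$ is Class~$2$. Next I would take a minimal counterexample and show it cannot contain a nontrivial $2$- or $3$-edge-cut: across such a cut one splits $G$ into smaller $3$-graphs, applies induction to each side, and glues the three matchings together so that their triple intersection stays empty on the cut edges. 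This would force the minimal counterexample to be $3$-connected and bicritical, i.e.\ a brick.

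The hard part will be the Class~$2$ brick itself, where I would try to produce the three matchings from Edmonds' perfect matching polytope: because $G$ is a $3$-graph, the constant vector $\tfrac13\mathbf{1}$ satisfies $x(\partial(v))=1$ and $x(\partial(X))=|\partial(X)|/3\ge 1$ for odd $X$, so it lies in the polytope and is a convex combination $\sum_i\lambda_i\chi^{M_i}$ of perfect matchings with $\sum_{i:\,e\in M_i}\lambda_i=\tfrac13$ on every edge. The obstacle is that this fractional certificate is essentially free, whereas extracting from the family $\{M_i\}$ three genuine matchings with empty common intersection is an integrality question of the same flavor as Conjecture~\ref{SeymourPerfects}. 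I expect exactly this extraction to resist: for bricks such as the Petersen graph no two perfect matchings are disjoint, so one cannot hope to arrange $M_1\cap M_2=\emptyset$ and must instead control the common edges of a chosen pair and then route a third matching through the complement of those edges. Making that routing work uniformly across all Class~$2$ bricks is where the difficulty concentrates, and it is the step I would not expect to dispatch by routine matching theory.
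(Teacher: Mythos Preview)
The statement you are trying to prove is Conjecture~\ref{FanRaspaud}, and the paper does \emph{not} prove it: it is recorded there as an open conjecture of Fan and Raspaud, together with the one-line remark that the Berge--Fulkerson conjecture implies it. Your first paragraph reproduces exactly that implication (any three of the six Fulkerson matchings have empty common intersection), and that conditional argument is correct and is all the paper offers.

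Everything from your second paragraph onward is an attempt at an unconditional proof, and you are candid that it does not close. That assessment is accurate. The reduction to bricks is reasonable and parallels what the paper does for Conjectures~\ref{SeymourColoring} and~\ref{SeymourPerfects}, but once you arrive at a Class~$2$ brick you have only the fractional certificate $\tfrac13\mathbf{1}$ in the matching polytope, and extracting three integral matchings with empty triple intersection from that convex combination is precisely the open content of the Fan--Raspaud conjecture. Your final paragraph identifies this correctly: the routing of a third matching through the complement of $M_1\cap M_2$ is not something current matching theory delivers, and no argument you have sketched would supply it. So the gap is not a flaw in your reasoning but the fact that the target is, as far as the paper is concerned, genuinely open; there is no proof in the paper to compare against beyond the conditional derivation you already gave.
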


Conjecture \ref{SeymourPerfects} implies that every $r$-graph has $2r$ perfect matchings such that any three of them have an empty intersection. In the context of the conjectures
of Seymour, Berge, Fulkerson and of Fan, Raspaud, the following statement seems to be a natural generalization of Conjecture \ref{FanRaspaud}.

\begin{conjecture}\label{FanRaspaud_r} For $r \geq 3$, any $r$-graph contains $r$ perfect matchings such that the intersection of any three of them is empty.
\end{conjecture}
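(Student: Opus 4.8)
Since the abstract states that a variant of the Fan--Raspaud conjecture is \emph{disproved}, and Conjecture~\ref{FanRaspaud_r} is precisely the variant introduced here, I would not attempt to prove it but rather to refute it. Observe that Conjecture~\ref{SeymourPerfects} implies Conjecture~\ref{FanRaspaud_r} --- keep any $r$ of the $2r$ matchings, since the property ``every three have empty intersection'' is inherited by subfamilies --- so one should look for a counterexample with $r\ge 4$ (for $r=3$ the statement is Conjecture~\ref{FanRaspaud}, which is open and widely believed), and such a counterexample would also refute Conjecture~\ref{SeymourPerfects} for that value of $r$. The key tool is a reformulation. A family $M_1,\dots,M_r$ of perfect matchings of $G$ has all triple intersections empty if and only if every edge of $G$ lies in at most two of the $M_i$; and a multiset of $r$ perfect matchings of $G$ with this last property is the same thing as a proper $r$-edge-colouring of the $r$-regular multigraph $H$ obtained from $G$ by doubling each edge used twice and deleting each edge used zero times (a colour class cannot contain two parallel copies of one edge, hence projects to a perfect matching of $G$). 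Thus Conjecture~\ref{FanRaspaud_r} fails for $G$ precisely when \emph{every} $r$-regular multigraph obtained from $G$ by doubling some edges (each at most once) and deleting some edges is Class~$2$. The task becomes: find an $r$-graph all of whose ``$\{0,1,2\}$-doublings'' fail to be $r$-edge-colourable.

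Two naive routes should be excluded at the outset, to see where the difficulty sits. The fractional relaxation is useless: $\tfrac1r\mathbf 1$ lies in the perfect matching polytope of any $r$-graph, so on average each edge may be covered once and no fractional obstruction survives. A single tight odd cut is also useless: if $\partial(X)$ is an odd cut with $|\partial(X)|=r$, then for any cut edge $xy$ one may contract $V\setminus X$, resp.\ $X$, to a single vertex; the result is again an $r$-graph, hence has a perfect matching through the image of $xy$ (every edge of an $r$-graph lies in a perfect matching), and splicing the two near-perfect matchings so obtained gives a perfect matching of $G$ meeting $\partial(X)$ in the single edge $xy$. So no tight cut can force all perfect matchings to cross it at least three times, which is the only \emph{local} way the ``at most two per edge'' budget on a tight cut could be overrun; the obstruction must be global. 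I would therefore build $G$ from known Class~$2$ pieces --- for $r=3$ the bridgeless cubic Class~$2$ graphs, and for $r\ge 4$ suitable $r$-graphs assembled from them (for instance by adding perfect matchings, or by $2$-cut or ``dot'' sums) --- chosen rigidly enough that every admissible doubling either drops some odd cut below $r$ (so that $H$ is not an $r$-graph, hence not Class~$1$) or again produces a Class~$2$ $r$-graph.

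The hard part is exactly the quantifier ``every admissible doubling''. A doubling is governed by a pair $J,J'$ of edge-disjoint subgraphs of $G$ having the same degree at every vertex (double $J$, delete $J'$), and an $r$-graph has exponentially many such pairs, so they cannot be examined one at a time. What is needed is a graph rigid enough that a single parity or structural argument disposes of all of them at once --- presumably an explicit infinite family, valid for every $r$ above some small bound but necessarily not for $r=3$ --- together with a verification that the base graph is indeed an $r$-graph. A smaller but genuine nuisance is the bookkeeping about whether the $r$ perfect matchings must be distinct and about allowing a perfect matching with multiplicity; the reformulation above is phrased to absorb both, but any more hands-on argument would have to track them.
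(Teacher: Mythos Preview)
You have misidentified the target. Conjecture~\ref{FanRaspaud_r} is \emph{not} the ``variant of a conjecture of Fan, Raspaud'' that the abstract says is disproved; in the paper it is stated as an open conjecture and is neither proved nor refuted. The variant actually disproved is the stronger statement in which one perfect matching $F$ is \emph{fixed in advance}: the paper shows (Theorem~5) that for every odd $r=2k+1\ge 3$ there is an $r$-graph $G$ with a perfect matching $F$ such that no choice of $F_1,\dots,F_{r-1}$ satisfies $F\cap F_i\cap F_j=\emptyset$ for all $i<j$. This is obtained by blowing up each vertex of an unslicable $r$-graph $H$ into a multiplicity-$k$ cycle of length $r$, so that the old edges of $H$ form the fixed matching $F$; the argument then forces $F_1,\dots,F_{r-1}$ to induce $r-1$ pairwise disjoint perfect matchings of $H$, contradicting unslicability.

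Your own proposal contains the warning sign: you correctly note that Conjecture~\ref{SeymourPerfects} implies Conjecture~\ref{FanRaspaud_r}, so a counterexample to the latter would also refute Seymour's generalized Berge--Fulkerson conjecture. That would be a major result, not something the paper claims, and it should have prompted a rereading of what exactly is being disproved. The reformulation you give (families with all triple intersections empty correspond to $r$-edge-colourings of suitable $\{0,1,2\}$-doublings) is fine as far as it goes, but the programme of finding an $r$-graph all of whose admissible doublings are Class~2 is not what the paper does, and indeed there is no reason to believe such a graph exists; the paper only rules out the version where one matching is prescribed.
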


As the Berge, Fulkerson Conjecture implies Conjecture \ref{FanRaspaud}, Conjecture \ref{SeymourPerfects} implies Conjecture \ref{FanRaspaud_r}.
This paper studies the structure of minimum counter example to Conjectures \ref{SeymourColoring} and \ref{SeymourPerfects}, 
and a variation of \ref{FanRaspaud_r}.

A graph $G=(V,E)$ is matching covered if every edge belongs to a perfect matching. It is factor-critical if $G-u$ has a perfect matching for every vertex $u$, and it 
is bi-critical if $G-u-v$ has a perfect matching for every pair of vertices $u$ and $v$.  A barrier in a matching covered graph is a set $X \subseteq V$
such that $\theta(G-X) = |X|$. Note that a single vertex in a matching covered graph is a barrier. A non-bipartite, bi-critical and $3$-vertex-connected graph is a brick. 
This paper proves that a minimal counter example to Conjectures \ref{SeymourColoring} and \ref{SeymourPerfects} is a brick. 

There are almost no results on these conjectures, and one might think about some variations of the original conjectures. 
It is a natural question whether a perfect matching can be fixed in Conjectures \ref{SeymourPerfects} or \ref{FanRaspaud}, say: 
Let $G$ be an $r$-graph ($ \geq 3$) and $F$ a perfect matching of $G$, then $G$ has $r-1$ perfect matchings $F_1,F_2, \dots ,F_{r-1}$ such that 
for any $1 \leq i < j \leq r-1$ the intersection $F \cap F_i \cap F_j$ is empty. We will show that this this is not true for any odd $r$.

\section{The main results}

Let $G=(V,E)$ be an $r$-graph. An odd set $X \subseteq V$ with $|\partial(X)|=r$ is non-trivial, if $|X| \not = 1, |V|-1$.
We start with the following theorem. 

\begin{theorem}\label{decomposition}Any $r$-graph $G=(V,E)$ satisfies at least one of the following conditions:
\begin{enumerate}
	\item [(1)] $G$ is bipartite;
	\item [(2)] there is non-trivial odd $X\subseteq V(G)$, such that $|\partial(X)|=r$;
	\item [(3)] $G$ is bi-critical.
\end{enumerate}
\end{theorem}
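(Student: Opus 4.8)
The plan is to assume that $G$ is neither bipartite (so condition (1) fails) nor bi-critical (so condition (3) fails), and to produce a non-trivial odd set $X$ with $|\partial(X)| = r$, establishing condition (2). Since $G$ is not bi-critical, there exist two vertices $u, v$ such that $G - u - v$ has no perfect matching. By Tutte's theorem (Theorem \ref{Tutte}) applied to $G - u - v$, there is a set $Y \subseteq V(G - u - v)$ with $\theta((G - u - v) - Y) > |Y|$. A parity argument on the number of vertices of $G$ (even, since $r$-graphs have perfect matchings) shows $\theta((G-u-v)-Y) \equiv |Y| \pmod 2$, so in fact $\theta((G-u-v)-Y) \geq |Y| + 2$. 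Setting $X = Y \cup \{u, v\}$, we get $\theta(G - X) \geq |X|$; combined with the fact that $G$ has a perfect matching (so $\theta(G - X) \leq |X|$ by Tutte), we obtain a \emph{barrier} $X$ with $\theta(G - X) = |X|$ and $|X| \geq 2$. This rules out $X$ being a single vertex.

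The core of the argument is then a counting estimate on the edge boundary of an odd component. Let $C_1, \dots, C_{|X|}$ be the odd components of $G - X$ (there may be even components too, but one shows they can be absorbed or ignored). Each $C_i$ is an odd set, so $|\partial(C_i)| \geq r$ by the $r$-graph hypothesis. On the other hand, $\sum_i |\partial(C_i)| \leq \sum_{x \in X} \deg(x) = r|X|$, because every edge counted in some $\partial(C_i)$ has an end in $X$ (an edge between two odd components would contradict $C_i, C_j$ being distinct components, and edges from $C_i$ to even components must also be handled — here one should be careful, and it may be cleanest to first argue the even components are empty or to merge them into the count). Since there are exactly $|X|$ odd components and each contributes at least $r$ to a sum bounded by $r|X|$, equality holds throughout: $|\partial(C_i)| = r$ for every $i$, and moreover every edge leaving $X$ goes to some odd $C_i$ and no edge stays within $X$.

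It remains to exhibit one $C_i$ that is a \emph{non-trivial} odd set, i.e.\ with $|C_i| \neq 1$ and $|C_i| \neq |V| - 1$. If some $C_i$ has $|C_i| \geq 2$ then, since $|X| \geq 2$ forces $|C_i| \leq |V| - |X| \leq |V| - 2 < |V| - 1$, that $C_i$ works and we are done with $X := C_i$. The remaining case is that every odd component of $G - X$ is a single vertex. Then $G - X$ consists of $|X|$ isolated vertices plus possibly even components; I expect one then uses non-bipartiteness together with the structure of barriers to either find an odd cut directly or to re-run the argument. The main obstacle I anticipate is precisely this degenerate case where all odd components are singletons: handling it likely requires invoking that $G$ is connected and non-bipartite to locate an odd cycle, and showing that contracting it or choosing a different pair $u, v$ (or a maximal barrier) yields a genuinely non-trivial odd set. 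Careful bookkeeping of the even components of $G - X$ — whether they exist, and how their boundary edges enter the counting inequality — is the other delicate point, and I would address it up front by choosing $X$ to be a maximal barrier so that $G - X$ has no even components.
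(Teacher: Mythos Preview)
Your approach is essentially the paper's: find a barrier of size at least $2$, run the edge-counting $\sum |\partial(C_i)| \le r|X|$ against the lower bound $|\partial(C_i)| \ge r$ to force equality, then split on whether some odd component has more than one vertex. The only cosmetic difference is that the paper starts from a \emph{maximal} barrier $S$ and quotes the Gallai--Edmonds-type lemma that all components of $G-S$ are factor-critical (so there are no even components to worry about), whereas you manufacture a barrier $X$ of size $\ge 2$ directly from the failure of bi-criticality via Tutte's theorem. Both routes land on the same counting step.

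The ``obstacle'' you flag is not one. Once equality holds in $\sum_i |\partial(C_i)| = r|X|$, all $r|X|$ edge-ends at $X$ are accounted for by edges to odd components; hence $X$ is independent and, by connectedness of $G$, there are no even components at all. If moreover every odd component is a single vertex, then $V$ is partitioned into the two independent sets $X$ and $V\setminus X$, so $G$ is bipartite, contradicting your assumption that (1) fails. That finishes the proof immediately; no odd cycles, contractions, or alternate choice of $u,v$ are needed. (Also, your aside about ``edges from $C_i$ to even components'' is a slip: distinct components of $G-X$ have no edges between them by definition.)
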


\begin{proof} Clearly, an $r$-graph is matching covered (see Exercise 3.4.4 in \cite{Lov} p. 113), and let
$S$ be any maximal barrier with $|S|=k$. Then, $G-S$ has $k$ componets $G_1,...,G_k$, which are factor-critical (see Lemma 4.6. in \cite{Handbook} p. 198). 
Since $G_1,...,G_k$ are odd, it follows that $|\partial(G_i)|\geq r$, for $i=1,...,k$. On the other hand, the set $S$ can receive at most $r|S|=rk$ edges, 
thus $S$ is an independent set and $|\partial(G_i)|=r$, for $i=1,...,k$. 

Now, consider the following two cases:

Case 1: There is a maximal barrier $S$ with $|S|>1$. 
If the components of $G-S$ are isolated vertices, then $G$ is bipartite and hence (1). If there is a component $G_i$ of $G-S$, 
that is not an isolated vertex, then $|\partial(G_i)|=r$ and $G$ satisfies the condition (2) of the theorem.

Case 2: for any maximal barrier $S$, we have $|S|=1$.
Let $S=\{u\}$. $G-S=G-u$ is factor-critical, which means that for any vertex $v$, the graph $G-u-v$ has a perfect matching, thus $G$ satisfies the condition (3).
$\square$
\end{proof}

\begin{theorem}\label{BrickTheorem} Let $G=(V,E)$ be a bi-critical non-bipartite $r$-graph, that contains no non-trivial odd set $X \subseteq V$ with $|\partial(X)|=r$. Then $G$ is a brick.
\end{theorem}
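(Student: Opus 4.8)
The plan is to observe that, since $G$ is already assumed to be non-bipartite and bi-critical, the only thing left to check is that $G$ is $3$-vertex-connected. First I would record two easy preliminaries: bi-criticality ($G-u-v$ has a perfect matching for all $u,v$) forces $|V|$ to be even, and a $2$-vertex $r$-graph is a pair of vertices joined by $r$ parallel edges, hence bipartite; so we may assume $|V|\ge 4$. It remains to rule out vertex cuts of size $0,1,2$, and the $0$ case is the standing connectedness hypothesis.

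Next I would rule out a cut vertex $u$. Let $C_1,\dots,C_k$ ($k\ge 2$) be the components of $G-u$; since $u$ is the only removed vertex, $\partial(V(C_i))$ consists precisely of the edges joining $u$ to $C_i$. Because $\sum_i|V(C_i)|=|V|-1$ is odd, some $C_i$, say $C_1$, has odd order, so $|\partial(V(C_1))|\ge r$; but these lie among the $r$ edges incident with $u$, so all of them go into $C_1$ and $u$ has no neighbour in $C_2$, contradicting connectedness. (A purely parity-based variant also works: choosing the deleted second vertex inside $C_1$ resp. inside $C_2$, bi-criticality forces $|V(C_1)|$ to be simultaneously odd and even.)

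The substantive step is to rule out a $2$-vertex cut $\{u,v\}$. Let $C_1,\dots,C_k$ ($k\ge 2$) be the components of $G-u-v$. Since $G$ is bi-critical, $G-u-v$ has a perfect matching, and a perfect matching restricts to each component, so every $C_i$ has \emph{even} order; in particular $|V(C_i)|\ge 2$. Now set $X=V(C_1)\cup\{u\}$ and $X'=V(C_1)\cup\{v\}$. Both have odd cardinality $|V(C_1)|+1\ge 3$, and both complements have cardinality at least $1+|V(C_2)|\ge 3>1$, so $X$ and $X'$ are non-trivial odd sets; by the hypothesis of the theorem neither can have boundary of size exactly $r$, so (as $G$ is an $r$-graph) $|\partial(X)|>r$ and $|\partial(X')|>r$. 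Writing $a_i,b_i$ for the numbers of edges joining $C_i$ to $u$, resp. to $v$, and $m$ for the number of $uv$-edges, $r$-regularity of $u$ and $v$ gives $\sum_i a_i+m=\sum_i b_i+m=r$, and a direct count of the edges leaving $X$ (namely the $uv$-edges, the $u$–$C_j$ edges with $j\ge 2$, and the $C_1$–$v$ edges) yields $|\partial(X)|=r-a_1+b_1$ and, symmetrically, $|\partial(X')|=r-b_1+a_1$. Adding, $|\partial(X)|+|\partial(X')|=2r$, contradicting $|\partial(X)|,|\partial(X')|>r$. Hence $G$ has no $2$-cut, so $G$ is $3$-connected and therefore a brick.

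The proof is not deep; the main point to get right is the parity bookkeeping — specifically, using bi-criticality to force the components of $G-u-v$ to be even so that the two test sets $V(C_1)\cup\{u\}$ and $V(C_1)\cup\{v\}$ really are \emph{odd}, and checking they are non-trivial (which is where $|V|\ge 4$ and $k\ge 2$ enter). The one genuinely new idea is to probe the "no non-trivial tight odd cut" hypothesis with exactly this pair of sets, whose boundaries sum to precisely $2r$.
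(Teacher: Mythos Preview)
Your proof is correct and follows essentially the same approach as the paper: both rule out a $2$-cut $\{u,v\}$ by noting that bi-criticality forces the components of $G-u-v$ to be even, then testing the ``no non-trivial tight odd cut'' hypothesis on the pair $V(C_1)\cup\{u\}$ and $V(C_1)\cup\{v\}$ to obtain $b_1>a_1$ and $a_1>b_1$ (equivalently, boundaries summing to $2r$ yet each exceeding $r$). If anything, you are more careful than the paper, which simply asserts that $r$-graphs are $2$-vertex-connected and does not explicitly verify that the test sets are non-trivial.
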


\begin{proof} Since $G$ is bi-critical and non-bipartite, and $r$-graphs are $2$-vertex-connected, it suffices to show that $G$ does not have a $2$-vertex-cut. 

Suppose that there is one and let $u,v$ be the vertices of the $2$-cut. Since $G$ is bi-critical, the graph $G-u-v$ has a perfect matching, thus all connected 
components $G_1,...,G_k$ of $G-u-v$ are even. Let $x$ denote the number of edges of $G$ that connect the vertices $u$ and $v$. For $i=1,...,k$ let $y_i$ and $z_i$ 
be the number of edges that connects $u$ and $v$ to $G_i$, respectively. Clearly,
\begin{equation*}
x+y_1+...+y_k=r=x+z_1+...+z_k
\end{equation*}

The sets $V(G_i)\cup \{u\}$ and $V(G_i)\cup \{v\}$ are odd. Thus $|\partial(V(G_i)\cup \{u\})|> r$ and $|\partial(V(G_i)\cup \{v\})|> r$. It follows that
\begin{equation*}
z_i+x+y_1+...+y_{i-1}+y_{i+1}+...+y_k>r=x+y_1+...+y_k
\end{equation*}
and
\begin{equation*}
y_i+x+z_1+...+z_{i-1}+z_{i+1}+...+z_k>r=x+z_1+...+z_k,
\end{equation*}
and hence
$z_i>y_i$ and $y_i>z_i$, which is a contradiction.
$\square$
\end{proof}

\begin{theorem} A minimum counter-example to either of conjectures \ref{SeymourColoring} and \ref{SeymourPerfects} is a brick.
\end{theorem}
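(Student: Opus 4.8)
The plan is to take a minimum counter-example $G$, minimum with respect to $|V(G)|$, to Conjecture \ref{SeymourColoring} (respectively to Conjecture \ref{SeymourPerfects}) and feed it into Theorem \ref{decomposition}, which forces $G$ to satisfy (1), (2) or (3). I would rule out (1) and (2); then (3) holds, so $G$ is bi-critical, and since the failure of (1) says $G$ is non-bipartite and the failure of (2) says $G$ has no non-trivial odd set with tight boundary, Theorem \ref{BrickTheorem} immediately yields that $G$ is a brick. The two conjectures are handled by the same template, differing only in the gluing step below. Alternative (1) is disposed of at once: a bipartite $r$-graph is $r$-edge-colorable by K\"onig's edge-coloring theorem, hence $(r+1)$-edge-colorable, and its $r$ color classes, each taken twice, form $2r$ perfect matchings covering every edge exactly twice; so a bipartite $r$-graph is a counter-example to neither conjecture, and a minimum counter-example is non-bipartite.

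The substantive step is ruling out alternative (2). Suppose $X\subseteq V$ is non-trivial and odd with $|\partial(X)|=r$, and put $\bar X=V\setminus X$. Since an $r$-graph has a perfect matching and hence even order, $\bar X$ is odd as well, and non-triviality of $X$ gives $|X|\geq 3$ and $|\bar X|\geq 3$. Let $G_1=G/\bar X$ and $G_2=G/X$ be obtained by contracting $\bar X$, respectively $X$, to a single vertex and deleting the loops that arise. I would first check that each $G_i$ is again an $r$-graph: the contracted vertex has degree $|\partial(X)|=r$, and an odd cut of $G_i$ either avoids the contracted vertex, in which case it is literally an odd cut of $G$, or uses it, in which case replacing that vertex by the whole contracted side turns it into an odd cut of $G$ of the same size (parity is preserved because the contracted side is odd); either way its size is at least $r$. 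Both $G_1$ and $G_2$ have strictly fewer vertices than $G$ and are connected, so by minimality each satisfies the conjecture under consideration; it remains to glue the certificates along $\partial(X)$, whose edges are exactly the edges incident with the contracted vertex in each $G_i$.

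For Conjecture \ref{SeymourColoring}: in a proper $(r+1)$-edge-coloring of $G_i$ the $r$ edges of $\partial(X)$ receive $r$ distinct colors, so composing the coloring of $G_2$ with a suitable permutation of the color set makes the two colorings agree on $\partial(X)$, and their union is then a proper $(r+1)$-edge-coloring of $G$. For Conjecture \ref{SeymourPerfects}: write $M_1,\dots,M_{2r}$ and $M_1',\dots,M_{2r}'$ for the two Berge--Fulkerson families of $G_1$ and $G_2$. Each $M_j$ (and each $M_j'$) meets $\partial(X)$ in exactly one edge, and each edge of $\partial(X)$ lies in exactly two members of each family, so after reindexing we may assume $M_j$ and $M_j'$ select the same edge $e_j\in\partial(X)$ for every $j$. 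Then $(M_j\cap E(G[X]))\cup(M_j'\cap E(G[\bar X]))\cup\{e_j\}$ is a perfect matching of $G$, and over $j=1,\dots,2r$ these cover every edge inside $X$, every edge inside $\bar X$, and every edge of $\partial(X)$ exactly twice. In either case $G$ satisfies the conjecture, contradicting the choice of $G$; hence (2) fails.

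Therefore $G$ satisfies (3), i.e.\ $G$ is bi-critical; being also non-bipartite and free of non-trivial tight odd cuts, $G$ is a brick by Theorem \ref{BrickTheorem}. I expect the main obstacle to be the bookkeeping in the treatment of alternative (2): verifying cleanly that the contractions $G/\bar X$ and $G/X$ are $r$-graphs on fewer vertices, and that the multiplicities of the $\partial(X)$-edges across the two Berge--Fulkerson families coincide so that the families can be reindexed compatibly. The color-permutation step is routine by comparison, and alternatives (1) and (3) are essentially immediate given Theorems \ref{decomposition} and \ref{BrickTheorem}.
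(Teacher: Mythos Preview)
Your proof is correct and follows essentially the same route as the paper: rule out bipartiteness and non-trivial tight odd cuts via contraction and gluing, then invoke Theorems~\ref{decomposition} and~\ref{BrickTheorem}. You supply considerably more detail than the paper does---in particular the verification that the contractions are again $r$-graphs and the explicit reindexing of the Berge--Fulkerson families across $\partial(X)$, both of which the paper dismisses with ``it is not hard to see'' and ``follows the same lines''---but the architecture is identical.
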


\begin{proof} Let $G$ be a minimum counter-example to conjecture \ref{SeymourColoring}. Since bipartite $r$-regular graphs are $r$-edge-colorable, $G$ is not bipartite. Next, we show that there is no non-trivial odd $X\subseteq V(G)$, such that $|\partial(X)|=r$. 

Suppose that there is one. Then consider the two graphs $G_1$ and $G_2$ that are obtained from $G$ by contracting $X$ and $V(G)\backslash X$ to a vertex, respectively. Clearly $G_1$ and $G_2$ are $r$-graphs; moreover since they are smaller than $G$, they are $(r+1)$-edge-colorable. Now, it is not hard to see that an $(r+1)$-edge-coloring of $G$ can be obtained from those of $G_1$ and $G_2$, which would contradict the choice of $G$.

Thus, $G$ contains no non-trivial odd $X\subseteq V(G)$, with $|\partial(X)|=r$. Theorem \ref{decomposition} implies that $G$ is bi-critical, and hence $G$ is a brick by 
Theorem \ref{BrickTheorem}.

The proof for conjecture \ref{SeymourPerfects} follows the same lines.
$\square$
\end{proof}

An $r$-graph $G$ is unslicable if for any perfect matching $F$, the graph $G-F$ is not an $(r-1)$-graph. Rizzi \cite{Rizzi} constructed unslicable $r$-graphs for every $r \geq 3$. 

\begin{theorem} For every $k \geq 1$, there is a $(2k+1)$-graph $G$ with perfect matching $F$, such that for any $2k$ perfect matchings $F_1, \dots, F_{2k}$ there are 
$1 \leq i < j \leq 2k$ such that $F \cap F_i \cap F_j \not =\emptyset$.
\end{theorem}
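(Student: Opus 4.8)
The plan is to build an explicit counterexample from a small odd-regular gadget and then "inflate" it to degree $2k+1$, forcing the prescribed matching $F$ to occupy a position from which no $2k$ matchings can avoid triple intersections with it. The natural source of rigidity is an \emph{unslicable} $r$-graph in the sense introduced just above the statement: if $G$ is unslicable and $F$ is any perfect matching, then $G-F$ fails to be an $(r-1)$-graph, so $G-F$ is not even $(r-1)$-edge-colorable in the structured way one would need. First I would take $r=2k+1$ and let $G$ be Rizzi's unslicable $(2k+1)$-graph, and fix an arbitrary perfect matching $F$ of $G$. The key observation is that the $2k$ matchings $F_1,\dots,F_{2k}$ with the property that $F\cap F_i\cap F_j=\emptyset$ for all $i<j$ would mean every edge lies in at most one of the $F_i$ among the edges of $F$; more precisely, for each edge $e\in F$, at most one index $i$ has $e\in F_i$. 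I would then count incidences: restricting the $F_i$ to $E(G)\setminus F$, each vertex is covered by $2k$ of the sets $F_i$ (one could be "used up" on the $F$-edge at that vertex), and I would argue this forces the $F_i\setminus F$ to behave like a proper $2k$-edge-coloring of $G-F$, i.e.\ an $(2k)$-edge-coloring of the $2k$-regular graph $G-F$, contradicting unslicability (which prevents $G-F$ from being a well-behaved $(r-1)=2k$-graph admitting such a decomposition).

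More carefully, the main steps in order are: (i) set $r=2k+1$ and invoke Rizzi's construction to get an unslicable $r$-graph $G$; (ii) choose $F$ to be a perfect matching of $G$ whose removal $G-F$ is $(2k)$-regular but — by unslicability — is not a $(2k)$-graph, so there is an odd set $Y$ with $|\partial_{G-F}(Y)|<2k$, hence $=2k-1$ or smaller of the correct parity; (iii) suppose for contradiction that $F_1,\dots,F_{2k}$ are perfect matchings of $G$ with pairwise-within-$F$ empty triple intersections, i.e.\ no edge of $F$ lies in two of the $F_i$; (iv) observe that for each $i$, $F_i\triangle F$ is a disjoint union of alternating cycles, and the edges of $F_i$ not in $F$ together with one "leftover" matching form a cover of $E(G)\setminus F$ in which each vertex is saturated the right number of times; (v) deduce that $F_1\setminus F,\dots,F_{2k}\setminus F$, suitably completed, give $2k$ perfect matchings of $G-F$ covering every edge exactly once, i.e.\ a proper $2k$-edge-coloring of $G-F$ that certifies $G-F$ is a $2k$-graph — contradicting step (ii).

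The hard part will be step (iv)–(v): turning "no edge of $F$ is in two $F_i$'s" into an honest edge-decomposition of $G-F$. The subtlety is that an edge of $F$ may lie in \emph{zero} of the $F_i$, so the $F_i$ do not partition $E(G)$; one must show the slack is forced to vanish by a parity/counting argument on $\partial(Y)$ for the bad odd set $Y$ guaranteed by unslicability. Concretely, I expect to count edges of $\partial_G(Y)$ according to membership in $F$ and in the $F_i$: each $F_i$, being a perfect matching of $G$, meets $\partial_G(Y)$ in an odd number of edges, hence at least one; summing over $i$ and comparing with $|\partial_G(Y)|=r=2k+1$ and $|F\cap\partial_G(Y)|\ge 1$ (odd) should produce the contradiction, because the $F_i$ restricted to the $2k$ non-$F$ edges across $Y$ cannot each grab one while remaining pairwise disjoint there if $F$ already "blocks" the reuse of the $F$-edge. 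Once this local obstruction at $Y$ is pinned down, the global statement follows, and the theorem holds for every $k\ge 1$ since Rizzi supplies the gadget for every odd $r\ge 3$.
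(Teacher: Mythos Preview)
Your plan has a genuine gap at exactly the point you flag as ``the hard part''. From the hypothesis that no edge of $F$ lies in two of the $F_i$ you try to extract $2k$ perfect matchings of $G-F$ (or at least a certificate that $G-F$ is a $2k$-graph). But the hypothesis says nothing about edges \emph{outside} $F$: an edge $e\notin F$ may lie in any number of the $F_i$, so the sets $F_i\setminus F$ need not be pairwise disjoint, need not be perfect matchings of $G-F$, and need not cover $E(G)\setminus F$. The cut-counting you sketch does not close this, because on $\partial_G(Y)\setminus F$ the $F_i$ can all pick the very same edge. Concretely, your scheme already fails at $k=1$: the Petersen graph is an unslicable $3$-graph, yet for every perfect matching $F$ one can pick an edge $e\notin F$, let $F_1,F_2$ be the two perfect matchings of Petersen containing $e$, and then $F_1\cap F_2=\{e\}\not\subseteq F$, so $F\cap F_1\cap F_2=\emptyset$. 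Thus no unslicable $G$ taken ``as is'' can serve as the example; an extra construction is needed to manufacture the rigidity.

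What the paper does differently is precisely to build that rigidity into $F$. Starting from an unslicable $(2k+1)$-graph $H$, it replaces each vertex $v$ by a $(2k)$-regular multicycle $C_v$ of length $2k+1$; the resulting graph $G$ has the old edges of $H$ as a perfect matching $F$, and every cut $\partial_G(C_v)$ lies entirely inside $F$ and has size exactly $2k+1$. Now the hypothesis \emph{does} bite: each $F_i$ meets the odd cut $\partial_G(C_v)$, the meeting points are in $F$, and pairwise disjointness on $F$ forces each $F_i$ to take exactly one (distinct) edge of $\partial_G(C_v)$. Projecting to $H$ yields $2k$ pairwise disjoint perfect matchings of $H$, so $H$ minus one of them is a $2k$-graph, contradicting unslicability. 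The missing idea in your attempt is this vertex-blow-up, which turns ``edges of $F$'' into whole tight odd cuts and thereby converts the weak triple-intersection condition into genuine disjointness across those cuts.
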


\begin{proof} Let $r = 2k+1$ $(k \geq 1)$ be an odd number and $H$ an unslicable $r$-graph. Let $C$ be a (multi-) cycle of length $r$, where every edge has multiplicity $k$, 
i.e. $C$ is $2k$-regular. Replace every vertex $v$ of $H$ by a copy $C_v$ of $C$ to obtain a $(2k+1)$-regular graph $G =(V,E)$. Note that the {\em old} edges of $H$ in $G$ form a 
perfect matching $F$ of $G$. 

We first show that $G$ is an $(2k+1)$-graph. Clearly $|V|$ is even. Let $X \subseteq V$ be an odd set and assume that $|\partial_G(X)| < 2k+1$. If $\partial_G(X)$ contains an edge of 
a cycle $C_v$, then it contains at least $2k$ of them, i.e. $|\partial_G(X)| = 2k$. This implies that $v$ is a cut vertex in $H$, contradicting the fact that $H$ is 2-vertex-connected. Thus 
$\partial_G(X)$ contains at least one old edge of $H$ and hence $|\partial_G(X)| \geq 2k+1$, contradicting our assumption. 

Thus we may assume that $\partial_G(X)$ contains only old edges of $H$. Let $X^-$ be the subset of $G$ from which $X$ is obtained in the transfornmation from $H$ to $G$. 
$X$ is an odd set in $G$ and hence 
$X^-$ is an odd set in $H$. But then $|\partial_H(X)| < 2k+1 \leq r$, contradicting the fact that $H$ is an $r$-graph. Thus $G$ is a $(2k+1)$-graph. 

Now assume that $G$ has $r-1$ perfect matchings $F_1, \dots ,F_{r-1}$ such that $F \cap F_i \cap F_j = \emptyset$, for any $1 \leq i < j \leq r-1$. 
Consider $\partial_G(C_v)$. Since it is an odd cut, it follows
that  $\partial_G(C_v) \cap F_i \not = \emptyset$, for $i=1,2,...,r-1$. Furthermore $\partial_G(C_v) \subseteq F$ by the choice of $F$. If there is a perfect matching, say $F_1$ such that 
$|\partial_G(C_v) \cap F_1| > 1$, then, since we assume that $F \cap F_1 \cap F_i = \emptyset$ ($i > 1$), the remaining $r-2$ perfect matching $F_2 \dots, F_{r-1}$ share $r-3$ edges.
Thus there are $i\not=j$, such that the intersection of $F$, $F_i$, and $F_j$ is not empty. 
Thus every perfect matching of $F_1,\dots,F_{r-1}$ contains precisely one edge of $\partial_G(C_v)$, and they are pairwise disjoint on $\partial_G(C_v)$. 
But this implies that they induce $r-1$ pairwise disjoint perfect matchings $F_1^-, \dots, F_{r-1}^-$ on $H$. Thus $H$ contains an $r-1$-graph, contradicting the fact that
$H$ is unslicable. $\square$
\end{proof}

\end{document}